 \newcommand{\be}{\begin{eqnarray}}
 \newcommand{\ee}{\end{eqnarray}}
\begin{document}

\title{Revealing hidden standard tripartite nonlocality by local filtering}

\author{Qiao-Qiao Lv$^1$  \and Jin-Min Liang $^1$  \and Zhi-Xi Wang$^1$  \and Shao-Ming Fei$^{1,2}$   }
\institute{ Zhi-Xi Wang\at
              \email{wangzhx@cnu.edu.cn}\\
           \and Shao-Ming Fei \at
             \email{feishm@cnu.edu.cn}\\
 \at 1 School of Mathematical Sciences,  Capital Normal University,  Beijing 100048,  China\\
 \at 2  Max Planck Institute for Mathematics in the Sciences, Leipzig 04103, Germany
}



\maketitle

\begin{abstract}
\label{intro}
Quantum nonlocality is a kind of significant quantum correlation that is stronger than quantum entanglement and EPR steering. The standard tripartite nonlocality can be detected by the violation of the Mermin inequality. By using local filtering operations, we give a tight upper bound on the maximal expected value of the Mermin operators. By detailed examples we show that the hidden standard nonlocality can be revealed by local filtering which can enhance the robustness of the noised entangled states.

\keywords{Mermin inequality, Standard tripartite nonlocality, Local filtering}
\end{abstract}

\section{Introduction}
Nonlocal quantum correlations are the essential nature of quantum physics \cite{EPRnonlocal}. For bipartite systems the Bell nonlocality \cite{BellNonlocal,Wiseman,StrongNonlocal} gives rise to stronger quantum correlations than quantum entanglement \cite{HHHH2009entangle,LS2019,CG2005} and EPR steering \cite{steer2020}. As a fascinating counterintuitive phenomenon related to the foundation of quantum mechanics \cite{Bell1966}, nonlocality is understood as a resource in quantum information process ranging form quantum computation \cite{Liang2020,Liang2022}, quantum key distributed \cite{AK1991} and random numbers certification \cite{Nature2010}. It is significant to detect the nonlocality of given quantum systems.

A bipartite quantum state $\rho_{AB}$ is said to be locally correlated if the joint probability distributions satisfies \cite{Wiseman},
\begin{eqnarray}\nonumber
P(a, b|A_x, B_y)=\sum_{\lambda}p_{\lambda}P(a|A_x,\lambda)P(b|B_y,\lambda),
\end{eqnarray}
where $p_{\lambda}$ is the probability distribution over the hidden variables $\lambda$, $p_{\lambda}\geq 0$, $\sum_{\lambda}p_{\lambda}=1$, $P(a,b|A_x, B_y)$ denotes the joint probability distribution that Alice performs measurement $A_x$ on subsystem $A$ with outcome $a$, and Bob performs measurement $B_y$ on subsystem $B$ with outcome $b$, $P(a|A_x, \lambda)$ denotes the conditional probability of getting outcome $a$ when Alice performs the measurement $A_x$ on her subsystem, $P(b|B_y, \lambda)$ is similarly defined. The bipartite nonlocality can be detected by the violation of a kind of Bell inequality, the CHSH inequality \cite{CHSH1969}.

The maximal violation of a Bell inequality can be enhanced by local filtering operations \cite{CHSHviolate}. In \cite{filterCHSH} the authors presented a class of two-qubit entangled states admitting local hidden variable models, and shew that these states violate a Bell inequality after the local filtering. Namely, there exist entangled states whose so called hidden non-locality can be revealed by using a sequence of measurements. In fact, local filtering operations can not only reveal hidden quantum nonlocality, but also hidden quantum steerability \cite{FilterSteer}.

For tripartite systems, the Mermin inequality \cite{Mermin1990} is a natural generalization of the CHSH inequality, which can be violated by not only genuine tripartite nonlocal states but also by standard tripartite nonlocal states. The upper bound on the maximal expectation value of the Mermin operator has been nicely derived in \cite{QIP2019}, though not always tight.
Similar to the enhanced maximal violation of Bell inequalities, it is interesting to study the
enhancement of the maximal violation of the Mermin inequality under local filtering.

In this work, we investigate the maximal violation of the Mermin inequality under local filtering. We first analyze and obtain the tight upper bounds on the maximal expected value of the Mermin operator under local filtering. Then applying our results to some special quantum states, the isotropic states and the noisy GHZ states, we show that local filtering can reveal the hidden standard nonlocality. Moreover, it is shown that the local filtering can transform the initial noisy state to a state with stronger tripartite nonlocality.

\section{Tight upper bound on Mermin operator under local filtering}
\label{sec:1}
A tripartite state $\rho_{ABC}$ is fully locally correlated if the joint probability distribution admits a local hidden variable (LHV) model \cite{Wiseman}, namely,
\begin{eqnarray}\nonumber
\begin{aligned}
P(a,b,c|A_x, B_y, C_z)=\sum_{\lambda}p_{\lambda}P(a|A_x, \lambda)P(b|B_y, \lambda)P(c|C_z, \lambda)
\end{aligned}
\end{eqnarray}
for all $x, y, z, a, b$ and $c$, where $P(a,b,c|A_x, B_y, C_z)$ is the joint probability when Alice, Bob and Charlie perform local measurements $A_x$, $B_y$ and $C_z$ with outcomes $a$, $b$ and $c$, respectively, $p_{\lambda}\geq 0$ is the probability distribution over the hidden variable $\lambda$, $\sum_{\lambda}p_{\lambda}=1$, $P(a|A_x, \lambda)$ denotes the conditional probability of obtaining outcome $a$ when Alice performs the measurement $A_x$ on her subsystem, $P(b|B_y, \lambda)$ and $P(c|C_z, \lambda)$ are similarly defined.

The tripartite non-locality of arbitrary 3-qubit states $\rho$ can be detected by the violation of the Mermin inequality $|\langle\mathcal{M}\rangle_{\rho}|\equiv |\textrm{Tr}[\mathcal{M} \rho]|\leq 2$. The Mermin operator $\mathcal{M}$ has the form \cite{Mermin1990},
$$\mathcal{M}=A_0\otimes (B_0C_1+B_1C_0)+A_1\otimes (B_0C_0-B_1C_1),\nonumber$$
where $A_0$, $A_1$, $B_0$, $B_1$, $C_0$ and $C_1$ are quantum mechanical observables of the form $K=\vec{k}\cdot\vec{\sigma}=\sum\limits ^{3}_{i=1}k_i\sigma_i$, with a unit vector $\vec{k}\in\{\vec{a}, \vec{a}', \vec{b}, \vec{b}', \vec{c}, \vec{c}'\}$, the standard Pauli matrices  $\sigma_i$, $K\in\{A_0, A_1, B_0, B_1, C_0, C_1\}$.

In a recent work \cite{QIP2019} it has been shown that the maximal expectation value of the Mermin operator for arbitrary 3-qubit state $\rho$ is given by $\mathcal{Q}_{\mathcal{M}}:= \max|\langle\mathcal{M}\rangle_{\rho}|\leq 2\sqrt{2}\lambda_{\max}$, where $\lambda_{\max}$ is the largest singular value of matrix $C$, $C=(C_{j,ik})$ is the correlation matrix of $\rho$ with entries given by $C_{ijk}=\textrm{Tr}[\rho(\sigma_i\otimes\sigma_j\otimes\sigma_k)]$, $i, j, k=1, 2, 3$. This upper bound is tight if the degeneracy with respect to the largest singular value $\lambda_{\max}$ is more than 1, and the two degenerate nine-dimensional singular vectors corresponding to $\lambda_{\max}$ take the forms of $\vec{a}\otimes\vec{c}-\vec{a}'\otimes\vec{c}'$ and $\vec{a}\otimes\vec{c}'+\vec{a}'\otimes\vec{c}$.

Next, we investigate the violations of the Mermin inequality under local filtering, by
computing the maximal expectation values of the Mermin operators with respect to the locally filtered 3-qubit states. For any 3-qubit state $\rho$, after local filtering one gets \cite{filterCHSH},
\begin{eqnarray}\label{rhop}
\begin{aligned}
\rho'=\frac{1}{F}(F_A\otimes F_B\otimes F_C)\rho(F_A\otimes F_B\otimes F_C)^{\dag},
\end{aligned}
\end{eqnarray}
where $F_A$, $F_B$ and $F_C$ are positive operators acted locally on the three subsystems respectively, $F=\textrm{Tr}[(F_A\otimes F_B\otimes F_C)\rho(F_A\otimes F_B\otimes F_C)^{\dag}]$ is the normalization constant. Suppose the filter operators $F_A$, $F_B$ and $F_C$ have the following spectral decompositions,
\begin{align}\label{filter}
F_A=U\Sigma_A U^{\dag},~~ F_B=V\Sigma_B V^{\dag},~~ F_C=W\Sigma_C W^{\dag},
\end{align}
where $U$, $V$ and $W$ are unitary operators. Set
\begin{eqnarray}
\begin{aligned}
\alpha_i=\Sigma_A\sigma_i\Sigma_A,~~ \beta_j=\Sigma_B\sigma_j\Sigma_B,~~ \gamma_k=\Sigma_C\sigma_k\Sigma_C,
\end{aligned}
\end{eqnarray}
for $i, j, k=1, 2, 3$. Without loss of generality, we assume that the singular matrices
have the forms,
\begin{eqnarray}
\begin{aligned}
\Sigma_A=\left(\begin{array}{cc}
  l&0\\
  0&1
\end{array}\right),~~
\Sigma_B=\left(\begin{array}{cc}
  m&0\\
  0&1
\end{array}\right),~~
\Sigma_C=\left(\begin{array}{cc}
  n&0\\
  0&1
\end{array}\right),
\end{aligned}
\end{eqnarray}
with $l,m,n\geq 0$.

We have the following theorem which provides a tight upper bound on the maximal violation value of the Mermin inequality.

\begin{theorem}
For an arbitrary 3-qubit quantum state $\rho$, the maximal expectation value of the Mermin operator for the filtered state $\rho'$ satifies
\begin{eqnarray}\label{QM}
\begin{aligned}
\max|\langle\mathcal{M}\rangle_{\rho'}|\leq 2\sqrt{2}\lambda{'}_{\max},
\end{aligned}
\end{eqnarray}
where $\lambda{'}_{\max}$ is the maximal singular value of $\frac{\tilde{D}}{F}$, where $\tilde{D}=(\tilde{D}_{j,ik})$ with $\tilde{D}_{ijk}=\textrm{Tr}[\tilde{\rho}(\alpha_i\otimes \beta_j\otimes \gamma_k)]$, $\tilde{\rho}$ is a state that is locally unitary equivalent to $\rho$.
\end{theorem}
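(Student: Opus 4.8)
The plan is to reduce the claim to the tight bound of \cite{QIP2019} applied \emph{directly} to the filtered state $\rho'$, and then to show that the correlation matrix of $\rho'$ differs from $\tilde D/F$ only by multiplication with orthogonal matrices, which leaves all singular values unchanged. Concretely, the first step just invokes the result of \cite{QIP2019} for the 3-qubit state $\rho'$, giving $\max|\langle\mathcal M\rangle_{\rho'}|\le 2\sqrt 2\,\lambda'_{\max}$, where $\lambda'_{\max}$ is the largest singular value of the correlation matrix $C'=(C'_{j,ik})$ of $\rho'$ with $C'_{ijk}=\Tr[\rho'(\sigma_i\otimes\sigma_j\otimes\sigma_k)]$. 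Hence it suffices to prove that $C'$ and $\tilde D/F$ have the same singular values.

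Next I would unfold $C'_{ijk}$ using \eqref{rhop}. Since $F_A,F_B,F_C$ are positive, hence Hermitian, and the trace is cyclic, $C'_{ijk}=\frac1F\Tr[\rho\,(F_A\sigma_iF_A)\otimes(F_B\sigma_jF_B)\otimes(F_C\sigma_kF_C)]$. Inserting the spectral decompositions \eqref{filter} and writing $F_A\sigma_iF_A=U\,\Sigma_A(U^{\dag}\sigma_iU)\Sigma_A\,U^{\dag}$, I would use the standard fact that conjugation by a $2\times 2$ unitary acts on the real span of $\sigma_1,\sigma_2,\sigma_3$ as a rotation, i.e. $U^{\dag}\sigma_iU=\sum_{i'}O^A_{ii'}\sigma_{i'}$ for some $O^A\in SO(3)$, and likewise $V\mapsto O^B$, $W\mapsto O^C\in SO(3)$. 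This turns $F_A\sigma_iF_A$ into $U\big(\sum_{i'}O^A_{ii'}\alpha_{i'}\big)U^{\dag}$, and analogously for the other two factors. Substituting and using cyclicity once more,
\begin{eqnarray}\nonumber
\begin{aligned}
C'_{ijk}&=\frac1F\sum_{i',j',k'}O^A_{ii'}O^B_{jj'}O^C_{kk'}\,\Tr\big[(U^{\dag}\otimes V^{\dag}\otimes W^{\dag})\rho(U\otimes V\otimes W)\,(\alpha_{i'}\otimes\beta_{j'}\otimes\gamma_{k'})\big]\\
&=\frac1F\sum_{i',j',k'}O^A_{ii'}O^B_{jj'}O^C_{kk'}\,\tilde D_{i'j'k'},
\end{aligned}
\end{eqnarray}
where $\tilde\rho=(U\otimes V\otimes W)^{\dag}\rho(U\otimes V\otimes W)$ is locally unitarily equivalent to $\rho$, and $\tilde D_{i'j'k'}=\Tr[\tilde\rho\,(\alpha_{i'}\otimes\beta_{j'}\otimes\gamma_{k'})]$.

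The last step is linear-algebraic bookkeeping: reading this identity in the matricized form with row index $j$ and column multi-index $(i,k)$, it becomes $C'=\frac1F\,O^B\,\tilde D\,(O^A\otimes O^C)^{T}$. Since $O^A,O^B,O^C$ are orthogonal, so is the $9\times 9$ Kronecker product $O^A\otimes O^C$, and left/right multiplication of $\tilde D/F$ by orthogonal matrices preserves the singular-value spectrum. Therefore $\lambda'_{\max}$ equals the largest singular value of $\tilde D/F$, and combined with the first step this yields \eqref{QM}. The same orthogonal transformations also map the singular vectors of $\tilde D/F$ onto those of $C'$, so the tightness criterion of \cite{QIP2019} (degeneracy of the top singular value together with singular vectors of the form $\vec a\otimes\vec c-\vec a'\otimes\vec c'$ and $\vec a\otimes\vec c'+\vec a'\otimes\vec c$) transfers verbatim, making the bound tight.

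I expect the main obstacle to be precisely the middle step: confirming that conjugation by $U,V,W$ induces genuine $SO(3)$ rotations on the Pauli triples, and arranging the reshaping of the three-index object $\tilde D$ into a matrix so that what multiplies it on each side is exactly an orthogonal matrix. Once that index arrangement is done correctly, invariance of singular values under orthogonal equivalence does the rest, and no further estimate is needed.
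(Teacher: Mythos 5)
Your proposal is correct and follows essentially the same route as the paper: apply the bound of \cite{QIP2019} to $\rho'$, use the spectral decompositions of the filters together with the $SU(2)\to SO(3)$ adjoint action to write $C'$ as $\tilde D/F$ multiplied on left and right by orthogonal matrices, and conclude by orthogonal invariance of singular values. Your index bookkeeping in the matricization $C'=\frac1F\,O^B\,\tilde D\,(O^A\otimes O^C)^{T}$ is in fact the careful version of the paper's formula under the $(j,(i,k))$ reshaping.
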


\begin{proof}
Based on the dual relation between $SU(2)$ and $SO(3)$ \cite{SO1995}, we have $U\sigma_i U^{\dag}=\sum_{i{'}}O_{ii{'}}\sigma_{i{'}}$, where $U$ is a unitary operator and $O=(O_{ij})$ belongs to $SO(3)$. Therefore, we have
\begin{eqnarray}
\begin{aligned}
C'_{ijk}&=\textrm{Tr}[\rho'(\sigma_i\otimes\sigma_j\otimes\sigma_k)]\\
&=\textrm{Tr}[\frac{(F_A\otimes F_B\otimes F_C)\rho(F_A\otimes F_B\otimes F_C)^{\dag}}{F}(\sigma_i\otimes\sigma_j\otimes\sigma_k)]\\
&=\frac{1}{F}\textrm{Tr}[\rho(U\Sigma_A U^{\dag}\otimes V\Sigma_B V^{\dag}\otimes W\Sigma_C W^{\dag})(\sigma_i\otimes\sigma_j\otimes\sigma_k)(U\Sigma_A U^{\dag}\otimes V\Sigma_B V^{\dag}\otimes W\Sigma_C W^{\dag})]\\
&=\frac{1}{F}\textrm{Tr}[(U^{\dag}\otimes V^{\dag}\otimes W^{\dag})\rho(U\otimes V\otimes W)(\Sigma_AU^{\dag}\sigma_i U\Sigma_A\otimes\Sigma_BV^{\dag}\sigma_j V\Sigma_B\otimes\Sigma_CW^{\dag}\sigma_k W\Sigma_C)]\\
&=\frac{1}{F}\textrm{Tr}[\tilde{\rho}(\Sigma_A\sum\limits_{i'}O_{ii'}\sigma_{i'}\Sigma_A\otimes
\Sigma_B\sum\limits_{j'}O_{jj'}\sigma_{j'}\Sigma_B\otimes\Sigma_C\sum\limits_{k'}O_{kk'}\sigma_{k'}\Sigma_C)]\\
&=\frac{1}{F}\sum\limits_{i'j'k'}O_{ii'}O_{jj'}O_{kk'}\textrm{Tr}[\tilde{\rho}(\Sigma_A\sigma_{i'}\Sigma_A\otimes
\Sigma_B\sigma_{j'}\Sigma_B\otimes\Sigma_C\sigma_{k'}\Sigma_C)]\\
&=\frac{1}{F}\sum\limits_{i'j'k'}O_{ii'}O_{jj'}O_{kk'}\textrm{Tr}[\tilde{\rho}(\alpha_{i'}\otimes \beta_{j'}\otimes \gamma_{k'})]\\
&=\frac{1}{F}[O_A \tilde{D} (O^{T}_{B}\otimes O^{T}_{C})]_{ijk}.
\end{aligned}
\end{eqnarray}
Hence, $C'=\frac{O_A \tilde{D} (O^{T}_{B}\otimes O^{T}_{C})}{F}$, and $(C')^{\dag}C'=\frac{1}{F^2}(O_B\otimes O_C)\tilde{D}^{\dag}O_A^{\dag}O_A \tilde{D}(O_B\otimes O_C)^{\dag}=\frac{1}{F^2}(O_B\otimes O_C)\tilde{D}^{\dag} \tilde{D}(O_B\otimes O_C)^{\dag}$. As $O_B$ and $O_C$ belong to $SO(3)$, $(C')^{\dag}C'$ has the same eigenvalues as $\frac{\tilde{D}^{\dag}\tilde{D}}{F^2}$. That is to say, $\lambda{'}_{\max}$ is also the maximal singular value of $\frac{\tilde{D}}{F}$.
\end{proof}

\textit{\textbf{Remark }} The normalization factor $F$ has the following form
\begin{eqnarray}\nonumber
\begin{aligned}
F&=\textrm{Tr}[(F_A\otimes F_B\otimes F_C)\rho(F_A\otimes F_B\otimes F_C)^{\dag}]\\
&=\textrm{Tr}[(U\Sigma_A U^{\dag}\otimes V\Sigma_B V^{\dag}\otimes W\Sigma_C W^{\dag})\rho(U\Sigma_A U^{\dag}\otimes V\Sigma_B V^{\dag}\otimes W\Sigma_C W^{\dag})]\\
&=\textrm{Tr}[\rho(U\Sigma^2_A U^{\dag}\otimes V\Sigma^2_B V^{\dag}\otimes W\Sigma^2_C W^{\dag})]\\
&=\textrm{Tr}[(U^{\dag}\otimes V^{\dag}\otimes W^{\dag})\rho(U\otimes V\otimes W)(\Sigma^2_A\otimes\Sigma^2_B\otimes\Sigma^2_C)]\\
&=\textrm{Tr}[\tilde{\rho}(\Sigma^2_A\otimes\Sigma^2_B\otimes\Sigma^2_C)],
\end{aligned}
\end{eqnarray}
where $\tilde{\rho}$ and $\rho$ are local unitary equivalent. They have the same maximal violation value of the Mermin inequality.

Note that the inequality (\ref{QM}) saturates if the degeneracy of $\lambda'_{\max}$ is more than 1, and the two nine-dimensional singular vectors corresponding to $\lambda_{\max}$ take the forms of $\vec{a}\otimes\vec{c}-\vec{a}'\otimes\vec{c}'$ and  $\vec{a}\otimes\vec{c}{'}+\vec{a}{'}\otimes\vec{c}$, respectively.

To illustrate the theorem let us consider the following examples.

\textit{\textbf{Example 1.}} Consider the 3-qubit mixed Greenberger-Horne-Zeilinger (GHZ) state \cite{EX12015},
$$\rho_{GHZ}=p|GHZ\rangle\langle GHZ|+\frac{1-p}{4} I_2\otimes\tilde{I},$$
where $0\leq p\leq 1$, $|GHZ\rangle=\frac{|000\rangle+|111\rangle}{\sqrt{2}}$, $I_2$ is the $2\times 2$ identity matrix and $\tilde{I}= $diag$(1, 0, 0, 1)$.
The state $\rho_{GHZ}$ is shown to be genuine multipartite entangled for $\frac{1}{3}< p\leq 1$, and it admits bilocal hidden model for $0\leq p\leq 0.41667$ \cite{EX12015}. Later, Li \emph{et al.} pointed out that $\rho_{GHZ}$ is genuine multipartite nonlocal \cite{GN2013,LM2017} for $0.707107<p\leq 1$, namely, it violates the Svetlichny inequality (SI) \cite{Svetlichny1987} when $0.707107<p\leq 1$. The maximal violation of the Svetlichny inequality under local filtering has been also calculated \cite{FOP}. Recently, the upper bound of the Mermin operator has been studied in \cite{QIP2019}, which shows that the state violates the Mermin inequality if $\frac{1}{2}<p\leq 1$, i.e., the state is standard nonlocal for $\frac{1}{2}<p\leq 1$.

By direct calculation, we have the correlation matrix of $\rho_{GHZ}$,
\begin{eqnarray}
\begin{aligned}
C=\left(\begin{array}{ccccccccc}
  p&0&0&0&-p&0&0&0&0\\
  0&-p&0&-p&0&0&0&0&0\\
  0&0&0&0&0&0&0&0&0
\end{array}\right)
\end{aligned}
\end{eqnarray}
and
\begin{eqnarray}
\begin{aligned}
D=\left(\begin{array}{ccccccccc}
  plmn&0&0&0&-plmn&0&0&0&0\\
  0&-plmn&0&-plmn&0&0&0&0&0\\
  0&0&0&0&0&0&0&0&T
\end{array}\right),
\end{aligned}
\end{eqnarray}
where $D=(D_{j,ik})$, $D_{ijk}=\textrm{Tr}[\rho_{GHZ}(\alpha_i\otimes \beta_j\otimes \gamma_k)]$. The singular values of $D$ are $\sqrt{2}plmn$, $\sqrt{2}plmn$ and $T=\frac{(l^2-1)(m^2n^2+1)}{4}+\frac{(l^2+1)(m^2n^2-1)}{4}p$. $\tilde{\rho}_{GHZ}$ is locally unitary equivalent to $\rho_{GHZ}$. Then we have that $\frac{\sqrt{2}plmn}{F}$, $\frac{\sqrt{2}plmn}{F}$ and $\frac{T}{F}$ are the singular values of the matrix $\frac{\tilde{D}}{F}$, where
\begin{eqnarray}\nonumber
F=\textrm{Tr}[\rho_{GHZ}(\Sigma^2_A\otimes\Sigma^2_B\otimes\Sigma^2_C)]=\frac{(l^2+1)(m^2n^2+1)}{4}+\frac{(l^2-1)(m^2n^2-1)}{4}p.
\end{eqnarray}
The maximal singular value is $\lambda^{'}_{\max}=\frac{\sqrt{2}plmn}{F}$ for given $p$ with $\frac{\sqrt{2}plmn}{F}>\frac{T}{F}$. Then the upper bound of the maximal value of the Mermin operator is $2\sqrt{2}\lambda{'}_{\max}=\frac{4plmn}{F}$. Two singular vectors corresponding to the singular value $\lambda{'}_{\max}$ with degeneracy 2 can be chosen as $\vec{v}_1=(-1,0,0,0,1,0,0,0,0)^T$ and $\vec{v}_2=(0,1,0,1,0,0,0,0,0)^T$, which can be further written as
\begin{eqnarray}
&\vec{v}_1=(1,0,0)^T\otimes(-1,0,0)^T-(0,-1,0)^T\otimes(0,1,0)^T\nonumber,\\
&\vec{v}_2=(1,0,0)^T\otimes(0,1,0)^T+(0,-1,0)^T\otimes(-1,0,0)^T\nonumber.
\end{eqnarray}
By taking $\vec{a}=(1,0,0)^T$, $\vec{a}'=(0,-1,0)^T$, $\vec{c}=(-1,0,0)^T$, $\vec{c}'=(0,1,0)^T$, $\vec{b}$ and $\vec{b}'$ some suitable unit vectors, the upper bound $\frac{4plmn}{F}$ of $\rho_{GHZ}$ is attained. Therefore, the state violates the Mermin inequality if $\frac{4plmn}{F}>2$ under the restriction $\frac{\sqrt{2}plmn}{F}>\frac{T}{F}$. As a result, the standard nonlocality of the state $\rho'_{GHZ}$ can be detected by the Mermin inequality for $0.471428<p\leq 1$. However, the state violates the Mermin inequality if $\frac{1}{2}<p\leq 1$ \cite{QIP2019}. Hence, the hidden standard nonlocality of $\rho_{GHZ}$ is revealed by local filtering operation for $0.471428\leq p\leq 0.5$, see FIG. 1.
\begin{figure}[ht]
\centering
\includegraphics[scale=0.6]{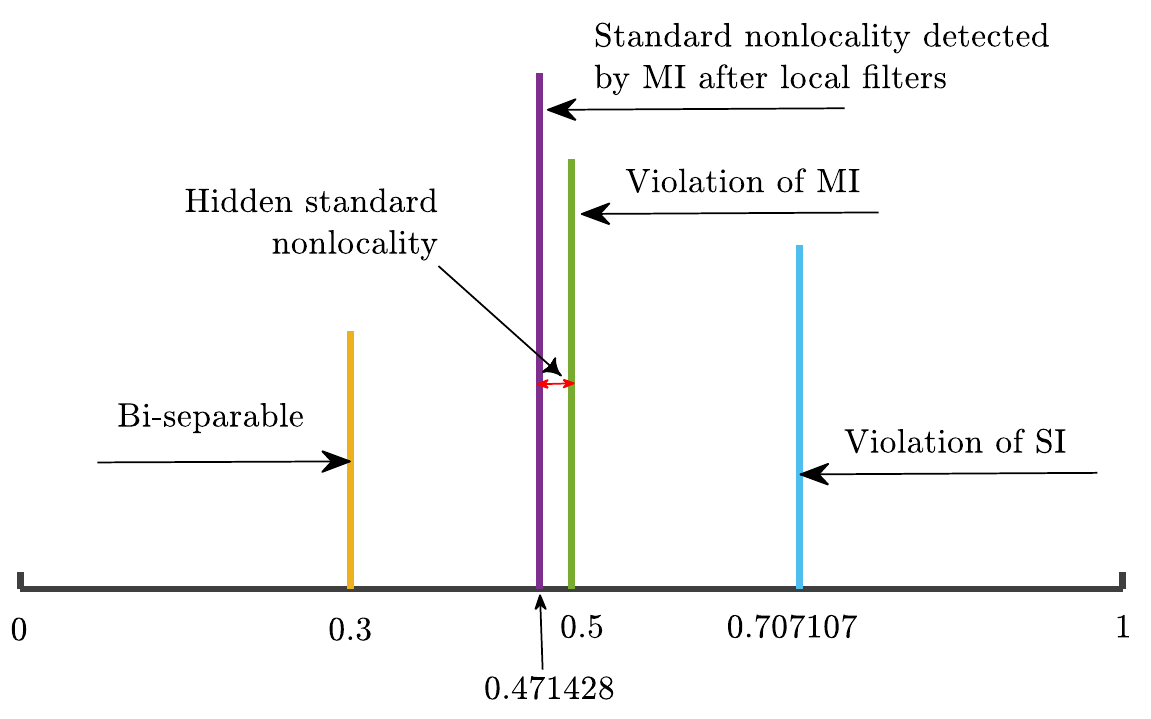}
\caption{The state $\rho_{GHZ}$ violates Mermin inequality (MI) for $0.5<p\leq 1$. The locally filtered state shows standard nonlocality for $0.471428<p\leq 1$. The hidden standard tripartite nonlocality is revealed for $0.471428\leq p\leq 0.5$.}
\end{figure}

\textit{\textbf{Example 2.}} Consider the following state given in \cite{EX22007},
\begin{eqnarray}
\begin{aligned}
\rho =p |\Psi\rangle\langle\Psi|+(1-p)|00\rangle\langle00|\otimes\frac{I_2}{2},
\end{aligned}
\end{eqnarray}
where $0\leq p\leq 1$, $|\Psi\rangle=\cos\frac{\pi}{8}|000\rangle+\sin\frac{\pi}{8}|111\rangle$.
Under local filtering we have
\begin{eqnarray}
\begin{aligned}
D=\left(\begin{array}{ccccccccc}
  \frac{plmn}{\sqrt{2}}&0&0&0&-\frac{plmn}{\sqrt{2}}&0&0&0&0\\
  0&-\frac{plmn}{\sqrt{2}}&0&-\frac{plmn}{\sqrt{2}}&0&0&0&0&0\\
  0&0&0&0&0&0&0&0&T
\end{array}\right).
\end{aligned}
\end{eqnarray}
The singular values of $D$ are $plmn$, $plmn$ and $T=\frac{l^2m^2(n^2-1)}{2}+\frac{-2+\sqrt{2}+2l^2m^2+\sqrt{2}l^2m^2n^2}{4}p$. Since $\tilde{\rho}$ is locally unitary equivalent to $\rho$, the singular value of the matrix $\frac{\tilde{D}}{F}$ are $\frac{plmn}{F}$, $\frac{plmn}{F}$ and $\frac{T}{F}$, where
\begin{eqnarray}\nonumber
F=\textrm{Tr}[\rho(\Sigma^2_A\otimes\Sigma^2_B\otimes\Sigma^2_C)]=\frac{l^2m^2(n^2+1)}{2}+\frac{2-\sqrt{2}-2l^2m^2+\sqrt{2}l^2m^2n^2}{4}p.
\end{eqnarray}
The maximal singular value is $\lambda'_{\max}=\frac{plmn}{F}$ for given $p$ with $\frac{plmn}{F}>\frac{T}{F}$. Then the upper bound of the maximal value of the Mermin operator is $2\sqrt{2}\lambda'_{\max}=\frac{2\sqrt{2}plmn}{F}$. This bound can be attained by selecting the two singular vectors, corresponding to the singular value $\lambda'_{\max}$ with degeneracy 2, to be $\vec{v}_1=(-1,0,0,0,1,0,0,0,0)^T$ and $\vec{v}_2=(0,1,0,1,0,0,0,0,0)^T$, which can be decomposed to
 \begin{eqnarray}
&\vec{v}_1=(1,0,0)^T\otimes(-1,0,0)^T-(0,-1,0)^T\otimes(0,1,0)^T\nonumber,\\
&\vec{v}_2=(1,0,0)^T\otimes(0,1,0)^T+(0,-1,0)^T\otimes(-1,0,0)^T\nonumber.
\end{eqnarray}
Let $\vec{a}=(1,0,0)^T$, $\vec{a}'=(0,-1,0)^T$, $\vec{c}=(-1,0,0)^T$, $\vec{c}'=(0,1,0)^T$. Together with some suitable unit vectors $\vec{b}$ and $\vec{b}'$, the upper bound $\frac{2\sqrt{2}plmn}{F}$ is attained. Therefore, the state violates the Mermin inequality if $\frac{2\sqrt{2}plmn}{F}>2$ under the restriction $\frac{plmn}{F}>\frac{T}{F}$, namely, the state $\rho$ violates the Mermin inequality if $0.318675 < p\leq 1$, for which the standard nonlocality of the state $\rho'$ is detected. The maximal violation of the Mermin inequality is shown in FIG. 2.
\begin{figure}[ht]
\centering
\includegraphics[scale=0.6]{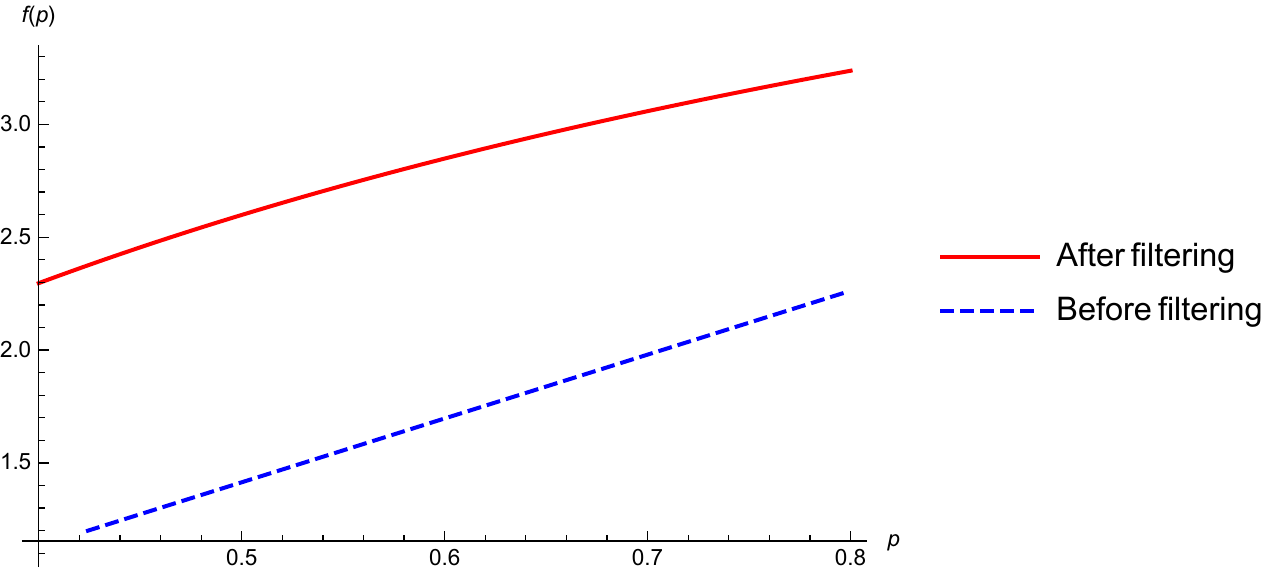}
\caption{$f(p)$ is the maximal value of $\mathcal{Q}(M)$. The red line represents the maximal violation of the locally filtered state $\rho'$. The blue line represents the maximal violation value of the initial state $\rho$.}
\end{figure}

Based on the protocol introduced in \cite{QIP2019}, $\rho$ is standard tripartite nonlocal for $0.707107<p\leq 1$. Therefore, the state $\rho$ shows hidden standard tripartite nonlocality for $0.318675\leq p\leq 0.707107$, see FIG. 3.
\begin{figure}[ht]
\centering
\includegraphics[scale=0.5]{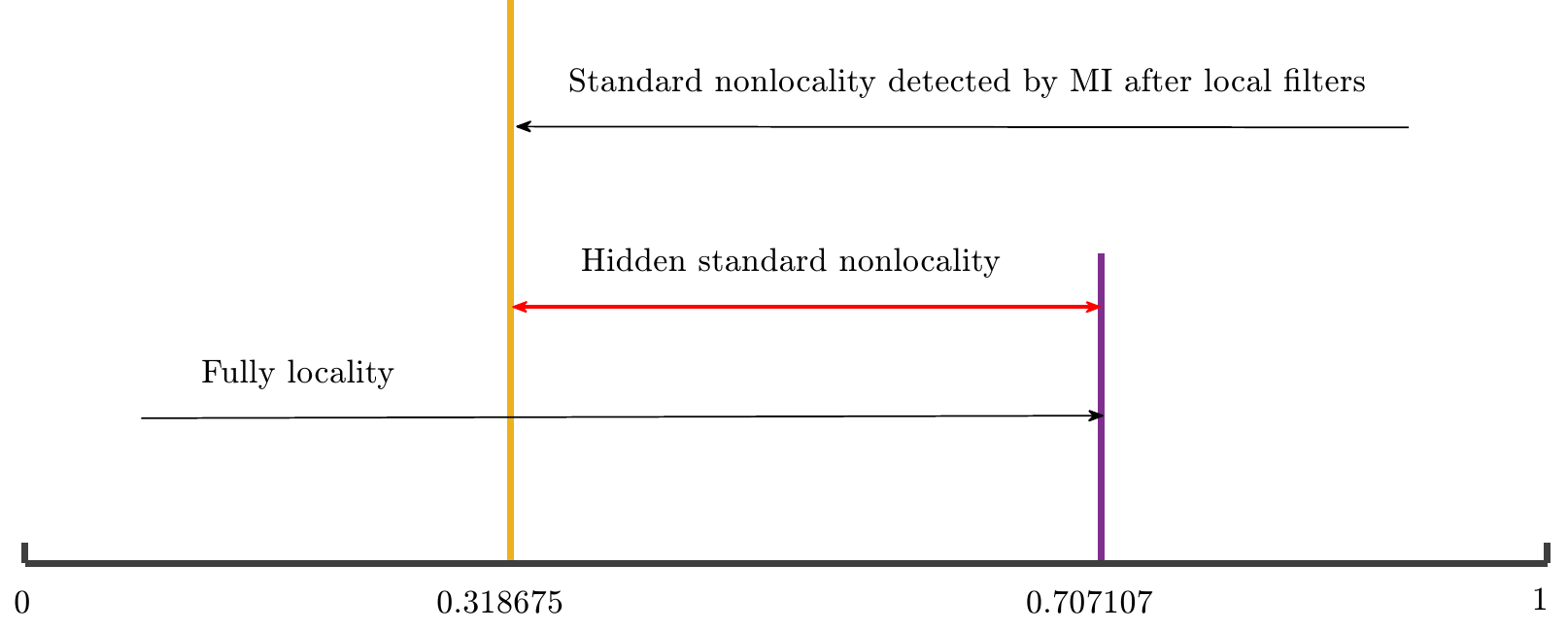}
\caption{The state $\rho$ does not violate SI for $0\leq p\leq 1$. It violates the Mermin inequality for $0.707107< p\leq 1$. After locally filtering $\rho'$ is standard nonlocal for $0.318675< p\leq 1$. The hidden standard nonlocality is revealed for $0.318675\leq p\leq 0.707107$.}
\end{figure}

\textit{\textbf{Example 3.}} The interaction between a quantum system and its environment may reduce the entanglement and nonlocality of the system. The GHZ state is a genuine tripartite nonlocal state as violates the Svetlichny inequality. Let us consider that the GHZ state goes through the amplitude damping(AD) noise channel which maps a qubit state $\rho$ to $\mathcal{E}_{AD}(\rho)=E_0\rho E^{\dag}_0+E_1\rho E^{\dag}_1$, where the Kraus operators are given by \cite{Nielsen},
 $$E_0=\left(\begin{array}{cc}
  1&0\\
  0&\sqrt{1-\gamma}
\end{array}\right),~~~
E_1=\left(\begin{array}{cc}
  0&\sqrt{\gamma}\\
  0&0
\end{array}\right),$$
$\gamma\in (0,1)$ is the damping rate.

When each qubit of the GHZ state $\rho_G$ undergoes the amplitude damping noise channel, one gets
\begin{eqnarray}\label{GHZ1}
\begin{aligned}
\rho_G^{AD}&\equiv\mathcal{E}_{AD}(\rho_G)\\
&=\frac{1}{2}\Big((1+\gamma^3)|000\rangle\langle000|+(1-\gamma )^3|111\rangle\langle111|\\
&\quad+(1-\gamma )^{3/2}(|000\rangle\langle111|+|111\rangle\langle000|)\\
&\quad+(1-\gamma )\gamma ^2(|001\rangle\langle001|+|010\rangle\langle010|+|100\rangle\langle100|)\\
&\quad+(1-\gamma )^2 \gamma(|011\rangle\langle011|+|101\rangle\langle101|+|110\rangle\langle110|)\Big).
\end{aligned}
\end{eqnarray}
The corresponding correlation matrix is
\begin{eqnarray}
\begin{aligned}
C=\left(\begin{array}{ccccccccc}
  s&0&0&0&-s&0&0&0&0\\
  0&-s&0&-s&0&0&0&0&0\\
  0&0&0&0&0&0&0&0&t
\end{array}\right),
\end{aligned}
\end{eqnarray}
where $s=(1-\gamma)^{\frac{3}{2}}$ and $t=\gamma(3-6\gamma+4\gamma^2)$. The singular values of $C$ are $\sqrt{2 (1-\gamma)^{\frac{3}{2}}}$, $\sqrt{2 (1-\gamma)^{\frac{3}{2}}}$ and $\gamma(4\gamma^2-6\gamma+3)$. We choose the two nine-dimension vectors to be $\vec{v}_1=(-1,0,0,0,1,0,0,0,0)^T$ and $\vec{v}_2=(0,1,0,1,0,0,0,0,0)^T$, which can be decomposed into
\begin{eqnarray}
&\vec{v}_1=(1,0,0)^T\otimes(-1,0,0)^T-(0,-1,0)^T\otimes(0,1,0)^T\nonumber,\\
&\vec{v}_2=(1,0,0)^T\otimes(0,1,0)^T+(0,-1,0)^T\otimes(-1,0,0)^T\nonumber.
\end{eqnarray}
Let $\vec{a}=(1,0,0)^T$, $\vec{a}'=(0,-1,0)^T$, $\vec{c}=(-1,0,0)^T$, $\vec{c}'=(0,1,0)^T$. Together with suitable unit vectors $\vec{b}$ and $\vec{b}'$, the upper bound of the maximal expectation of the Mermin operator is $4\sqrt{(1-\gamma)^{\frac{3}{2}}}$ based on \cite{Mermin1990}. Hence, the state is standard tripartite nonlocal for $\gamma\in(0,0.370039)$.

Now consider the filtering. The correlation matrix of the filtered state is
\begin{eqnarray}
\begin{aligned}
D=\left(\begin{array}{ccccccccc}
  S&0&0&0&-S&0&0&0&0\\
  0&-S&0&-S&0&0&0&0&0\\
  0&0&0&0&0&0&0&0&T
\end{array}\right),
\end{aligned}
\end{eqnarray}
where $S=lmn(1-\gamma)^{\frac{3}{2}}$. The singular values of $D$ are $\sqrt{2(1-p)^3l^2m^2n^2}$, $\sqrt{2(1-p)^3l^2m^2n^2}$ and
\begin{eqnarray*}
T&=&\frac{1}{2}(-1+l^2m^2n^2+\gamma^3(l^2+1)(m^2+1)(n^2+1)+\gamma(3+l^2+m^2+n^2)\\
&&-\gamma^2(3+2n^2+m^2(2+n^2)+l^2(2+m^2+n^2))).
\end{eqnarray*}
As $\tilde{\rho}^{AD}_G$ is locally unitary equivalent to $\rho^{AD}_G$, the singular values of the matrix $\frac{\tilde{D}}{F}$ are $\frac{\sqrt{2(1-p)^3l^2m^2n^2}}{F}$, $\frac{\sqrt{2(1-p)^3l^2m^2n^2}}{F}$ and $\frac{T}{F}$, where
\begin{eqnarray*}
F&=&\frac{1}{2}(1+l^2m^2n^2+\gamma^3(l^2-1)(m^2-1)(n^2-1)\\
&&+\gamma(-3+l^2+m^2+n^2)+\gamma^2(3-2n^2+m^2(-2+n^2)+l^2(-2+m^2+n^2))).
\end{eqnarray*}
The maximal violation value of the Mermin operator is $\frac{4\sqrt{(1-\gamma)^3l^2m^2n^2}}{F}$, with the restriction $\frac{\sqrt{2(1-p)^3l^2m^2n^2}}{F}\geq\frac{T}{F}$. Therefore, the filtered state violates the Mermin inequality for $\gamma\in(0, 0.394752)$. That is to say, for the state with $0.370069\leq\gamma\leq0.394752$, the state $\rho^{AD}_G$ is not a standard nonlocal state. FIG. 4 shows that for $\gamma\in(0, 0.394752)$, the filtered state violates the Mermin inequality, i.e., it is a standard tripartite nonlocal state.
\begin{figure}[ht]
\centering
\includegraphics[scale=0.6]{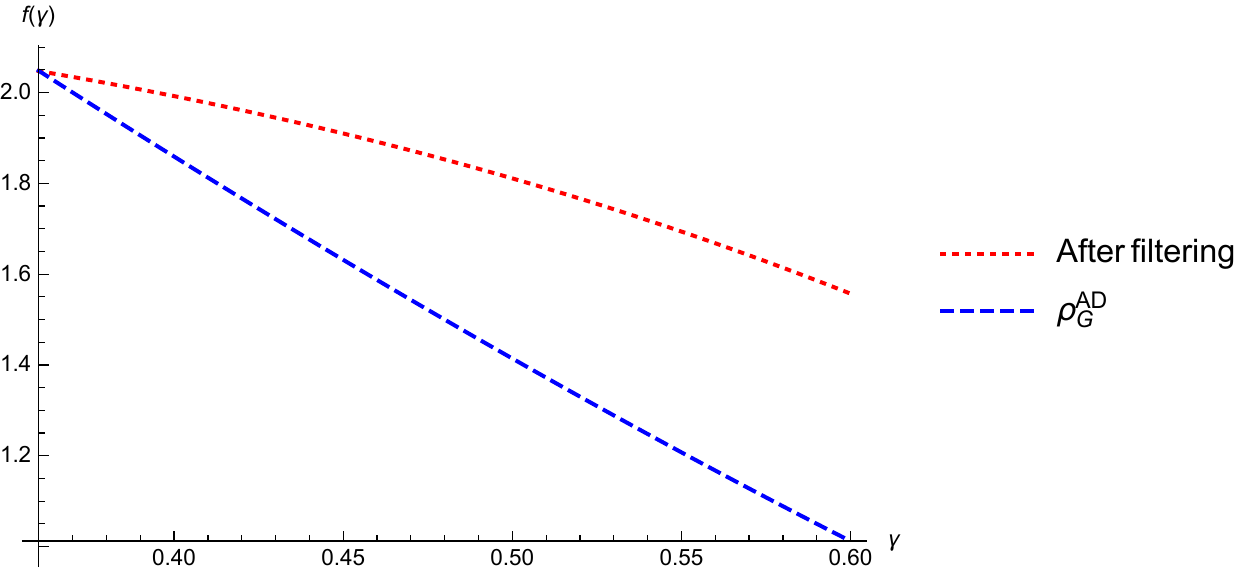}
\caption{$f(\gamma)$ is the maximal value of $\mathcal{Q}(M)$. The red line represents the maximal violation value of the state after local filtering. The filtered state is standard nonlocal for $0\leq \gamma<0.394752$.}
\end{figure}

\section{Conclusion}
In summary, we have investigated the maximal violation of the Mermin inequality under local filtering for any 3-qubit states. We have presented a tight upper bound for the maximal expectation value of the Mermin operator after local filtering. Furthermore, for the 3-qubit GHZ state, the standard tripartite nonlocal be revealed for $0.471428\leq p\leq 0.5$ by local filtering. Similarly, although the amplitude damping GHZ state is fully local for $0.370069\leq\gamma\leq0.394752$, the filtered one is standard nonlocal.
The local filtering process may reveal certain hidden quantum correlations including nonlocality \cite{filterCHSH,FOP} and steerability \cite{FilterSteer}. In order to improve the efficiency of quantum information process, a number of scheme have been put froward \cite{purify,ErrorCorrection,EntanglementConcentration,QuantumRepeaters}. The filter operations can also be used to improve the fidelity between quantum states and efficiency of information processing with noisy entangled state \cite{FilterNoise}. Our approach presented in this article can also be used to deal with other Bell-type inequalities for tripartite or multipartite systems.

\begin{acknowledgements} This work is supported by NSFC (Grant Nos. 12075159, 12171044), Beijing Natural Science Foundation (Z190005), Academy for Multidisciplinary Studies, Capital Normal University, the Academician Innovation Platform of Hainan Province, and Shenzhen Institute for Quantum Science and Engineering, Southern University of Science and Technology (SIQSE202001).
\end{acknowledgements}

\footnotesize{\noindent\textbf{Data Availability Statements} All data generated or analysed during this study are available from the corresponding author on reasonable request.}

\end{document}